\DeclarePairedDelimiter{\diagfencesbigg}{\Bigg(}{\Bigg)}
\newcommand{\diagBigg}{\operatorname{diag}\diagfencesbigg}
\DeclarePairedDelimiter{\diagfences}{(}{)}
\newcommand{\diag}{\operatorname{diag}\diagfences}
\begin{document}
%
\title{Capacity analysis and bit allocation design for variable-resolution ADCs in Massive MIMO}
\author{\IEEEauthorblockN{I. Zakir Ahmed  and Hamid Sadjadpour\\}
\IEEEauthorblockA{Department of Electrical Engineering\\
University of California, Santa Cruz\\
}
\and
\IEEEauthorblockN{Shahram Yousefi}
\IEEEauthorblockA{Department of Electrical and Computer Engineering\\
Queen's University, Canada\\
}}


%


\maketitle

\begin{abstract}
We derive an expression for the capacity of massive multiple-input multiple-output Millimeter wave (mmWave) channel where the receiver is equipped with a variable-resolution Analog to Digital Converter (ADC) and a hybrid combiner. 
The capacity is shown to be 
a function of Cramer-Rao Lower Bound (CRLB) for a given bit-allocation matrix and hybrid combiner. 
The condition for optimal ADC bit-allocation under a receiver power constraint is derived. This is derived based on the maximization of capacity with respect to bit-allocation matrix for a given channel, hybrid precoder, and hybrid combiner.
It is shown that this condition coincides with that obtained using the CRLB minimization proposed by Ahmed et al. 
 Monte-carlo simulations show that the capacity calculated using the proposed condition matches very closely with the capacity obtained using the Exhaustive Search bit allocation.
\end{abstract}


%

%
\IEEEpeerreviewmaketitle

\newcommand{\Xmatrix}{
\begin{bmatrix}
\ddots  & 0     & 0 \\
0  & \frac{1}{{\sigma_i^2}} & 0 \\
0 & 0 & \ddots
\end{bmatrix}}
\newcommand{\Ymatrix}{
\begin{bmatrix}
\ddots  & 0  & 0 \\
0  & \frac{f(b_i)l_i}{\big(1-f(b_i)\big) \sigma_i^2} & 0 \\
0 & 0 & \ddots
\end{bmatrix}}
\newcommand{\Zmatrix}{
\begin{bmatrix}
\ddots  & 0  & 0 \\
0  & \frac{\sigma_i^2}{\sigma_n^2 + \frac{f(b_i)l_i}{\big(1-f(b_i)\big)}} & 0 \\
0 & 0 & \ddots
\end{bmatrix}}
\newcommand{\ZMmatrix}{
\begin{bmatrix}
\ddots  & 0  & 0 \\
0  & \frac{\sigma_i^2}{\sigma_n^2 + \frac{f(b_i)l_i}{\big(1-f(b_i)\big)}} + \frac{1}{p} & 0 \\
0 & 0 & \ddots
\end{bmatrix}}
\newcommand{\Fmatrix}{
\begin{bmatrix}
\ddots  & 0  & 0 \\
0  & 10 & 0 \\
0 & 0 & \ddots 
\end{bmatrix}}
\newcommand{\InvFmatrix}{
\begin{bmatrix}
\ddots  & 0 & 0 \\
0  & f(b_i)\big(1-f(b_i)\big)l_i & 0 \\
0 & 0 & \ddots
\end{bmatrix}}

\section{Introduction}
Massive Multiple-Input Multiple-Output (MIMO) is a key feature for next generation of wireless communication standards. It is being considered both at sub-6Ghz frequencies and at mmWave frequencies \cite{5GBackHaul,SigProc}. In both scenarios, a large number of antennas help increase the capacity of the system through spatial multiplexing or increasing the energy efficiency by focusing on the intended user through beam-forming. With favorable channel conditions, a combination of both can be achieved. Hybrid precoding and combining can be used to capture the potential advantages. As such, precoders and combiners both in analog and digital domains, are adapted based on the changing channel conditions \cite{SigProc,mmPreCom}.

However, increasing the number of RF paths increases the cost of RF components and the power consumption. One of the power-hungry components in massive MIMO receivers operating at larger bandwidths and high resolution is the Analog to Digital Converter (ADC) \cite{Rangan}. In addition to power consumption, high resolution ADCs operating at high sampling frequencies produce huge amount of data that is difficult to handle. 

 \subsection{Previous Works} \label{pwork}
In previous works, the usage of low-medium-bit-resolution ADCs in massive MIMO has shown promising results in terms of energy efficiency and cost \cite{Muris,HybArchCap}. Analysis of low resolution and in particular 1-bit ADCs for massive MIMO architectures has received a great deal of attention \cite{SigProc,mmPreCom,Mezghani,HybArchCap}. Capacity and performance analysis with 1-bit ADCs were discussed in \cite{Mezghani}. Capacity analysis with 1-bit ADC using precoders are studied in \cite{Jmo}. Achievable rate analysis of 1-bit ADCs together with hybrid combining were discussed and evaluated in \cite{HybArchCap}.\\
\indent
Analysis of uniform low-resolution ADCs are considered in \cite{Rangan,Muris,HybArchCap,Uplink}. Approximate rate expression for low-resolution $n$-bit ADCs on all RF paths considering quantization as an Additive Quantization Noise Model (AQNM) was derived in \cite{Uplink}. The effect of low-resolution ADCs and bandwidth on the achievable rates using AQNM under a receiver power constraint was investigated in \cite{Rangan}. A generalized hybrid combiner with low-resolution ADCs was studied in \cite{HybArchCap}.\\ 
\indent
Adopting variable-bit resolution ADCs across RF paths is shown to improve the Mean Squared Error (MSE) performance of the receiver under some channel conditions  \cite{Zakir1}. A bit-allocation strategy is needed to achieve this MSE performance, under a power constraint. A near-optimal low complexity bit-allocation algorithm under a power constraint was presented in \cite{VarBitAlloc}. A joint combiner design and bit allocation framework using genetic algorithm to minimize the MSE of the quantized and combined symbol is devised in \cite{Zakir1}. An optimal condition for ADC bit-allocation for mmWave massive MIMO is derived by minimizing the Cramer-Rao Lower Bound (CRLB) which in effect minimizes the MSE of the quantized and combined symbol under a power constraint in \cite{Zakir2}.

 \subsection{Our Contribution}
 In this work, we derive a condition for optimal ADC bit-allocation based on maximizing the capacity for a given channel under a  receiver power constraint.
 The two main contributions of this paper are as follows:
 
 \textit{i)} We derive the capacity expression for a given mmWave channel as a function of bit-allocation matrix and hybrid combining matrices. The bit-allocation matrix facilitates variable-bit allocation on the receiver's RF paths. In addition, we show that this capacity is a function of the CRLB. It is shown that there exists a Minimum-Mean-Squared-Error (MMSE) estimator for the transmitted symbol vector that achieves this CRLB \cite{Zakir2}.
 
 \textit{ii)} We design an ADC bit-allocation algorithm for a given power budget based on maximizing the capacity of a given channel and hybrid combiners. In doing so, we arrive at exactly the same conditions and algorithm that was previously found under the minimization of CRLB \cite{Zakir2}. We substantiate our results with simulations.
\subsection{Notation}
We represent the column vectors using boldface small letters,  matrices as boldface uppercase letters,  the primary diagonal of a matrix  as $\text{diag}(\cdot)$, and the Expectation $E[\cdot]$ is over the random variable $\bold{n}$, which is an AWGN vector. The multivariate Gaussian distribution with mean $\boldsymbol{\mu}$ and covariance $\boldsymbol{\varphi}$ is denoted as $\mathcal{N}(\boldsymbol{\mu},\boldsymbol{\varphi})$ and the multivariate complex-valued circularly-symmetric Gaussian distribution with zero mean and covariance $\boldsymbol{\varphi}$ is denoted as $\mathcal{CN}(\bold{0},{\boldsymbol{\varphi}})$. The superscripts $T$ and $H$ denote transpose and Hermitian transpose, respectively.

\section{Signal Model}
The signal model for a transceiver encompassing hybrid precoding and combining for a mmWave MIMO channel is shown in Figure \ref{fig:Fig1new.pdf}. Let ${\bold{F}_D}$ and ${\bold{F}_A}$ be the digital and analog precoders, respectively, and ${\bold{W}_D^H}$ and ${\bold{W}_A^H}$ the digital and analog combiners. Also, $\bold{H} = \big[ h_{ij} \big]$ represents the \begin{math}N_r\times N_t\end{math} line of sight mmWave MIMO channel with properties defined in \cite{rapaport}(chapter 3, pages 99-125). We define $N_t$ and $N_r$ to represent the number of transmit and receive antennas, respectively. Let $\text{Q}_{\bold{b}} \big( {\bold{z}} \big)$ represent the AQNM as defined in \cite{VarBitAlloc}. Let $\bold{n}_q$ be the additive quantization noise vector uncorrelated with $\bold{z}$ Gaussian distributed as ${\bold{n}_q} \sim \mathcal{CN}(\bold{0},{\bold{D}_q^2})$, where ${\bold{D}_q^2} = {\bold{W}_{\alpha}}{\bold{W}_{1-\alpha}}{\text{diag}}[ {\bold{W}_A^H}{\bold{H}}({\bold{W}_A^H}{\bold{H}})^H+{\bold{I}_{N_{rs}}}]$ and 
$\bold{W}_{\alpha}\big( {\bold{b}} \big)$ is the diagonal bit-allocation matrix $\cite{VarBitAlloc}$. 
\begin{figure}[h]
\begin{center}
\setlength{\unitlength}{3947sp}%
\begingroup\makeatletter\ifx\SetFigFont\undefined%
\gdef\SetFigFont#1#2#3#4#5{%
  \reset@font\fontsize{#1}{#2pt}%
  \fontfamily{#3}\fontseries{#4}\fontshape{#5}%
  \selectfont}%
\fi\endgroup%
\begin{picture}(3589,1349)(136,-861)
\put(3151,-511){\makebox(0,0)[lb]{\smash{{\SetFigFont{8}{9.6}{\familydefault}{\mddefault}{\updefault}{\color[rgb]{0,0,0}$\bold{r}$}%
}}}}
\thinlines
{\color[rgb]{0,0,0}\put(345, 75){\framebox(730,389){}}
}%
{\color[rgb]{0,0,0}\put(2341, 75){\framebox(729,389){}}
}%
{\color[rgb]{0,0,0}\put(2341,-849){\framebox(729,389){}}
}%
{\color[rgb]{0,0,0}\put(1367,-849){\framebox(730,389){}}
}%
{\color[rgb]{0,0,0}\put(345,-849){\framebox(730,389){}}
}%
\thicklines
{\color[rgb]{0,0,0}\put(1075,270){\vector( 1, 0){292}}
}%
{\color[rgb]{0,0,0}\put(2097,270){\vector( 1, 0){244}}
}%
{\color[rgb]{0,0,0}\put(3070,270){\vector( 1, 0){291}}
}%
{\color[rgb]{0,0,0}\put(3411,-120){\vector( 0, 1){341}}
}%
{\color[rgb]{0,0,0}\put(3507,270){\line( 1, 0){196}}
\put(3703,270){\line( 0,-1){925}}
\put(3703,-655){\vector(-1, 0){633}}
}%
{\color[rgb]{0,0,0}\put(2341,-655){\vector(-1, 0){244}}
}%
{\color[rgb]{0,0,0}\put(1367,-655){\vector(-1, 0){292}}
}%
{\color[rgb]{0,0,0}\put(199,270){\vector( 1, 0){146}}
}%
{\color[rgb]{0,0,0}\put(345,-655){\vector(-1, 0){146}}
}%
\thinlines
{\color[rgb]{0,0,0}\put(1367, 75){\framebox(730,389){}}
}%
\put(2551,-661){\makebox(0,0)[lb]{\smash{{\SetFigFont{8}{9.6}{\familydefault}{\mddefault}{\updefault}{\color[rgb]{0,0,0}$\bold{W}_A^H$}%
}}}}
\put(1501,-661){\makebox(0,0)[lb]{\smash{{\SetFigFont{8}{9.6}{\familydefault}{\mddefault}{\updefault}{\color[rgb]{0,0,0}$\text{Q}_{\bold{b}} \big( {\bold{z}} \big)$}%
}}}}
\put(526,-661){\makebox(0,0)[lb]{\smash{{\SetFigFont{8}{9.6}{\familydefault}{\mddefault}{\updefault}{\color[rgb]{0,0,0}$\bold{W}_D^H$}%
}}}}
\put(3376,-211){\makebox(0,0)[lb]{\smash{{\SetFigFont{8}{9.6}{\familydefault}{\mddefault}{\updefault}{\color[rgb]{0,0,0}$\bold{n}$}%
}}}}
\put(2176,389){\makebox(0,0)[lb]{\smash{{\SetFigFont{8}{9.6}{\familydefault}{\mddefault}{\updefault}{\color[rgb]{0,0,0}$\bold{\tilde{x}}$}%
}}}}
\put(151,389){\makebox(0,0)[lb]{\smash{{\SetFigFont{8}{9.6}{\familydefault}{\mddefault}{\updefault}{\color[rgb]{0,0,0}$\bold{x}$}%
}}}}
\put(2626,239){\makebox(0,0)[lb]{\smash{{\SetFigFont{8}{9.6}{\familydefault}{\mddefault}{\updefault}{\color[rgb]{0,0,0}$\bold{H}$}%
}}}}
\put(1576,239){\makebox(0,0)[lb]{\smash{{\SetFigFont{8}{9.6}{\familydefault}{\mddefault}{\updefault}{\color[rgb]{0,0,0}$\bold{F}_A$}%
}}}}
\put(601,239){\makebox(0,0)[lb]{\smash{{\SetFigFont{8}{9.6}{\familydefault}{\mddefault}{\updefault}{\color[rgb]{0,0,0}$\bold{F}_D$}%
}}}}
\put(1201,-511){\makebox(0,0)[lb]{\smash{{\SetFigFont{8}{9.6}{\familydefault}{\mddefault}{\updefault}{\color[rgb]{0,0,0}$\bold{\tilde{y}}$}%
}}}}
\put(2176,-511){\makebox(0,0)[lb]{\smash{{\SetFigFont{8}{9.6}{\familydefault}{\mddefault}{\updefault}{\color[rgb]{0,0,0}$\bold{z}$}%
}}}}
\put(151,-511){\makebox(0,0)[lb]{\smash{{\SetFigFont{8}{9.6}{\familydefault}{\mddefault}{\updefault}{\color[rgb]{0,0,0}$\bold{y}$}%
}}}}
\put(3376,239){\makebox(0,0)[lb]{\smash{{\SetFigFont{9}{10.8}{\rmdefault}{\mddefault}{\updefault}{\color[rgb]{0,0,0}+}%
}}}}
{\color[rgb]{0,0,0}\put(3411,288){\circle{136}}
}%
\end{picture}%
\caption{Signal Model}
\label{fig:Fig1new.pdf}
\end{center}
\end{figure}

The transmitted symbol $\bold{x}$ is a vector of size $Ns\times1$ whose average power is $p$. Let ${\bold{n}}$ be a $N_r\times1$ noise vector of independent and identically distributed (i.i.d.) complex Gaussian random variables such that ${\bold{n}} \sim \mathcal{CN}(\bold{0},{\sigma_n^2}{\bold{I}_{N_r}})$. 

We assume that we have perfect Channel State Information (CSI) at the transmitter and the number of RF paths $N_{rs}$ at the receiver is the same as the number of parallel data streams $N_s$, i.e., $N_{rs} = N_s$. The analysis can be easily extended to the case $N_{rs} \ne N_s$.
 
The  dimensions of matrices indicated in Figure $\ref{fig:Fig1new.pdf}$ are as follows: 
${\bold{F}_D} \in \mathbb{C}^{N_{rt} \times N_s}$, ${\bold{F}_A} \in \mathbb{C}^{N_t \times N_{rt}}$, ${\bold{H}} \in \mathbb{C}^{N_r \times N_t}$, ${\bold{W}_A^H} \in \mathbb{C}^{N_{rs} \times N_r}$, ${\bold{W}_D^H} \in \mathbb{C}^{N_s \times N_{rs}}$, $\bold{W}_{\alpha}\big( {\bold{b}} \big) \in \mathbb{R}^{N_{rs} \times N_{rs}}$.
\subsection{Precoders and Combiners design}
We assume that the hybrid precoder is designed independent to that of combiners or bit-allocation. One could use the technique proposed in $\cite{SigProc,PreDsgn}$ for the same.\\
\indent
The hybrid combiner is designed based on the Singular Value Decomposition (SVD) of the given channel matrix \cite{Zakir2} as
\begin{equation}\label{CombDesign}
\begin{split}
{\bold{W}_A^H} = &{\bold{U}^H} = {\bold{W}_D}{\bold{\tilde{W}}_A^H},\\
\text{or } &{\bold{U}} = {\bold{\tilde{W}}_A}{\bold{W}_D^H}.\\
\end{split}
\end{equation}
Here, ${\bold{\tilde{W}}_A^H}$ is the actual analog combiner that factors in the constraints imposed by the phase shifters or splitters $\cite{SigProc}$. The imperfections in the analog combiner are compensated by the digital combiner ${\bold{W}_D}^H$. Also, ${\bold{U}} \in \mathbb{C}^{N_r \times N_s}$ is the left singular matrix of the SVD of the channel matrix $\bold{H}$ as $\bold{H} = \bold{U}\bold{\Sigma}\bold{F}_{\text{opt}}^H$ where
${\bold{U}} \in \mathbb{C}^{N_r \times N_s}, {\bold{\Sigma}} \in \mathbb{R}^{N_s \times N_s}, \text{ and } {\bold{F}_{\text{opt}}} \in \mathbb{C}^{N_t \times N_s}.$
The relationship between the transmitted signal vector $\bold{x}$ and the received symbol vector $\bold{y}$ is given by
\begin{equation}\label{eq5a}
\begin{split}
{\bold{y}} &= {\bold{W}_D^H}{\bold{W}_{\alpha}}{\big( {\bold{b}} \big)}{\bold{W}_A^H}{\bold{H}}{\bold{F}_A}{\bold{F}_D}{\bold{x}} + {\bold{W}_D^H}{\bold{W}_{\alpha}\big( {\bold{b}} \big)}{\bold{W}_A^H}{\bold{n}} \\
&+{\bold{W}_D^H}{\bold{n_q}}.
\end{split}
\end{equation}
Equation \eqref{eq5a} can be simplified as 
\begin{equation}\label{eq9a}
\begin{aligned}
\begin{split}
{\bold{y}} &= {\bold{W}_D^H}{\bold{W}_{\alpha}}{\big( {\bold{b}} \big)}{\bold{W}_A^H}{\bold{U}}{\bold{\Sigma}}{\bold{x}} + {\bold{W}_D^H}{\bold{W}_{\alpha}\big( {\bold{b}} \big)}{\bold{W}_A^H}{\bold{n}} \\
&+{\bold{W}_D^H}{\bold{n_q}},\\
{\bold{y}} &= {\bold{K}}{\bold{x}} + {\bold{n_1}}
\end{split}
\end{aligned}
\end{equation}
where 
${\bold{K}} = {\bold{W}_D^H}{\bold{W}_{\alpha}}{\bold{W}_A^H}{\bold{U}}{\bold{\Sigma}}$, and 
$\bold{n_1} = {\bold{W}_D^H}{\bold{W}_{\alpha}}{\bold{W}_A^H}{\bold{n}} + {\bold{W}_D^H}{\bold{n_q}}.$
Since
$E[{\bold{x}}{\bold{x}}^H] = p{\bold{I}_{N_s}}$, ${\bold{G}} = {\bold{W}_D^H}{\bold{W}_{\alpha}}{\bold{W}_A^H}$, $E[{\bold{n}}{\bold{n}}^H] = {\sigma_n^2}{\bold{I}_{N_r}}$, $E[{\bold{n_q}}{\bold{n_q}}^H] = {\bold{D}_q^2}$, 
where ${\bold{D}_q^2} = {\bold{W}_{\alpha}}{\bold{W}_{1-\alpha}}{\text{diag}}[ {\bold{W}_A^H}{\bold{H}}({\bold{W}_A^H}{\bold{H}})^H+{\bold{I}_{N_{rs}}}]$, and $E[{\bold{n}}{\bold{n_q}}^H] = 0,$ then 
the statistical distribution of $\bold{n_1}$ is $\bold{n_1} \sim \mathcal{N}(\bold{0},\bold{\Phi})$, where $\bold{\Phi} = {\sigma_n^2}{\bold{G}}{\bold{G}^H} + {\bold{W}_D^H}{\bold{D}_q^2}{\bold{W}_D}$ \cite{Zakir2}. For simplicity of notation we will refer to $\bold{W}_{\alpha}\big( {\bold{b}} \big)$ as $\bold{W}_{\alpha}$.
\section{Capacity Analysis}
The instantaneous capacity for a given MIMO channel with ADC power constraint and bit allocation can be written as
\begin{equation}\label{eq14a}
\begin{aligned}
C =  \{ \underbrace{\text{max}}_{\bold{b} \in \mathbb{I}^{N_s \times 1};{P_{\text{TOT}}}\leq{P_{\text{ADC}}}}{ I\big(\bold{x}; \bold{y}\big) }, \}
\end{aligned}
\end{equation}
where $\bold{b}=[b_1 b_2 b_3 .... b_N]^T$ is a vector whose entries $b_i$ indicate the number of bits $b_i$ (on both I and Q channels) that are allocated to the ADC on RF path $i$. $P_{\text{TOT}}$ is the total power consumed by the ADCs and is known to be $P_{\text{TOT}} = \sum_{i=1}^{N} c{f_s}2^{b_i}$,  where $c$ is the power consumed per conversion step and $f_s$ is the sampling rate in Hz $\cite{Uplink}$. $P_{\text{ADC}}$ is the allowed ADC power budget. Note that the mutual information in ($\ref{eq14a}$) is maximized with respect to the bit-allocation matrix ${\bold{W}_{\alpha}}{\big( {\bold{b}} \big)}$. Equation \eqref{eq14a} can be written  \cite{Thomas} as
\begin{equation}\label{eq15a}
\begin{split}
I(\bold{x};\bold{y}) &= h(\bold{y}) - h(\bold{y}|\bold{x}),\\
&= h(\bold{y}) - h(\bold{K}\bold{x} + \bold{n_1}|\bold{x}),\\
&= h(\bold{y}) - h(\bold{n_1}),
\end{split}
\end{equation}
where $h(\cdot)$ is the differential entropy of a continuous random variable. We assume that $\bold{x}$ and $\bold{n_1}$ are independent. If $\bold{y} \in \mathbb{C}^{N_s}$, then the differential entropy $h(\bold{y})$ is less than or equal to $\log_2\det(\pi e \bold{Q})$ with equality if and only if $\bold{y}$ is circularly symmetric complex gaussian with $E[\bold{y}\bold{y}^H] = \bold{Q}$ \cite{Bengt}. We have
\begin{equation}\label{eq16a}
\begin{split}
E[\bold{y}\bold{y}^H] = \bold{Q} &= E \Big[ (\bold{K}\bold{x} + \bold{n_1})(\bold{K}\bold{x} + \bold{n_1})^H \Big]\\
&= E \Big[ \bold{K}\bold{x}\bold{x}^H\bold{K}^H + \bold{n_1}\bold{n_1}^H \Big]\\
&= p\bold{K}\bold{K}^H + \bold{\Phi}.
\end{split}
\end{equation}
Note that $\bold{\Phi} = {{\sigma_n^2}{\bold{G}}{\bold{G}^H} + {\bold{W}_D^H}{\bold{D}_q^2}{\bold{W}_D}}$. Thus, the differential entropies $h(\bold{y})$ and $h(\bold{n_1})$ are given below.
\begin{equation}\label{diffent}
\begin{split}
h(\bold{y}) &\le \log_2\det(\pi e \bold{Q}) = \log_2\det \bigg( \pi e \Big(p \bold{K}\bold{K}^H + \bold{\Phi} \Big) \bigg), \\
h(\bold{n1}) &\le \log_2\det(\pi e \bold{\Phi}).
\end{split}
\end{equation}
The following Theorem proves that $\bold{n_1}$ is a circularly symmetric complex Gaussian vector. 
\newtheorem{theorem}{Theorem}
\begin{theorem}
If $\bold{n}_1 = {\bold{W}_D^H}{\bold{W}_{\alpha}}{\bold{W}_A^H}{\bold{n}} + {\bold{W}_D^H}{\bold{n}_q}$, where $\bold{n}$ is  $\bold{n} \sim \mathcal{CN}(\bold{0},{\sigma_n^2\bold{I}_{N_s}})$ and ${\bold{n}_q} \sim \mathcal{N}(\bold{0},{\bold{D}_q^2})$ with ${\bold{D}_q^2} = {\bold{W}_{\alpha}}{\bold{W}_{1-\alpha}}{\text{diag}}[ {\bold{W}_A^H}{\bold{H}}({\bold{W}_A^H}{\bold{H}})^H+{\bold{I}_{N_s}}]$, then it can be shown that $\bold{n}_1$ is circularly symmetric complex Gaussian (CSCG) vector. That is, $\bold{n_1} \sim \mathcal{CN}(\bold{0},\bold{\Phi})$.
\end{theorem}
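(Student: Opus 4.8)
The plan is to establish the claim in two stages using two standard closure properties of zero-mean circularly symmetric complex Gaussian (CSCG) vectors: first, that a deterministic linear map sends a zero-mean CSCG vector to another zero-mean CSCG vector; and second, that the sum of two \emph{independent} zero-mean CSCG vectors is again zero-mean CSCG, with covariance equal to the sum of the constituent covariances. Writing $\bold{G} = {\bold{W}_D^H}{\bold{W}_{\alpha}}{\bold{W}_A^H}$, I would decompose $\bold{n}_1 = \bold{G}\bold{n} + {\bold{W}_D^H}{\bold{n}_q}$ into its two additive components and treat each separately before recombining.

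To certify that each component is CSCG I would verify the three defining conditions: zero mean, a Hermitian covariance, and a \emph{vanishing pseudo-covariance} (relation matrix) $E[\,\cdot\,(\cdot)^T] = \bold{0}$. Since $\bold{n} \sim \mathcal{CN}(\bold{0},\sigma_n^2\bold{I})$ and $\bold{n}_q \sim \mathcal{CN}(\bold{0},\bold{D}_q^2)$ are each zero-mean CSCG, linearity of expectation gives $E[\bold{G}\bold{n}] = \bold{G}\,E[\bold{n}] = \bold{0}$ and $E[{\bold{W}_D^H}\bold{n}_q] = \bold{0}$; the Hermitian covariances follow as $E[\bold{G}\bold{n}\bold{n}^H\bold{G}^H] = \sigma_n^2\bold{G}\bold{G}^H$ and $E[{\bold{W}_D^H}\bold{n}_q\bold{n}_q^H{\bold{W}_D}] = {\bold{W}_D^H}\bold{D}_q^2{\bold{W}_D}$; and, crucially, the pseudo-covariances vanish because $E[\bold{G}\bold{n}(\bold{G}\bold{n})^T] = \bold{G}\,E[\bold{n}\bold{n}^T]\,\bold{G}^T = \bold{0}$ and likewise for the second term, the pseudo-covariance of any CSCG vector being zero by definition.

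Next, because $E[\bold{n}\bold{n}_q^H] = \bold{0}$ and the two noise sources are jointly Gaussian, the vectors $\bold{G}\bold{n}$ and ${\bold{W}_D^H}\bold{n}_q$ are independent, so the sum-closure property applies. I would then conclude that $\bold{n}_1$ is zero-mean CSCG with covariance $\sigma_n^2\bold{G}\bold{G}^H + {\bold{W}_D^H}\bold{D}_q^2{\bold{W}_D}$, which is exactly the matrix $\bold{\Phi}$ already identified in the excerpt; hence $\bold{n}_1 \sim \mathcal{CN}(\bold{0},\bold{\Phi})$.

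The main obstacle is not the Gaussianity but the circular-symmetry qualifier itself. It is routine to show $\bold{n}_1$ is Gaussian with covariance $\bold{\Phi}$, yet to legitimately write $\mathcal{CN}$ (rather than a general proper-or-improper complex Gaussian) one must separately confirm that the pseudo-covariance $E[\bold{n}_1\bold{n}_1^T]$ is the zero matrix. This rests on the inherited circular symmetry of $\bold{n}$ and $\bold{n}_q$ propagating through the linear maps, together with the joint-Gaussianity assumption implicit in the AQNM that upgrades the uncorrelatedness $E[\bold{n}\bold{n}_q^H] = \bold{0}$ to full independence; the step I would take the most care with is making this pseudo-covariance computation explicit rather than leaving it tacit.
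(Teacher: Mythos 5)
Your proposal follows the same skeleton as the paper's proof---decompose $\bold{n}_1 = \bold{G}\bold{n} + \bold{W}_D^H\bold{n}_q$, check zero mean and vanishing pseudo-covariance term by term, and read off the covariance $\bold{\Phi} = \sigma_n^2\bold{G}\bold{G}^H + \bold{W}_D^H\bold{D}_q^2\bold{W}_D$---but it has one genuine gap: you quietly replace the hypothesis $\bold{n}_q \sim \mathcal{N}(\bold{0},\bold{D}_q^2)$ with $\bold{n}_q \sim \mathcal{CN}(\bold{0},\bold{D}_q^2)$, i.e., you assume the circular symmetry of the quantization noise instead of establishing it. That is exactly the part of the claim the paper treats as nontrivial: the first half of the paper's proof is devoted to showing that $\bold{n}_q$ is CSCG, by arguing that since $\bold{D}_q^2$ is a positive real diagonal matrix (a product of positive real diagonal factors), one may represent $\bold{n}_q = \bold{D}_q\bold{w}$ with $\bold{w} \sim \mathcal{CN}(\bold{0},\bold{I}_{N_s})$, and then invoking the characterization that a vector is circularly symmetric jointly Gaussian if and only if it equals a fixed matrix times an i.i.d. CSCG vector. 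Only after this representation is in hand does the paper compute $E[\bold{n}_1]$, $E[\bold{n}_1\bold{n}_1^T]$, and $E[\bold{n}_1\bold{n}_1^H]$, exactly as you do.

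The gap matters because the conclusion is false under the hypothesis as you leave it. If $\bold{n}_q$ were, say, a real Gaussian vector with covariance $\bold{D}_q^2$---perfectly consistent with the notation $\mathcal{N}(\bold{0},\bold{D}_q^2)$---then its pseudo-covariance is $E[\bold{n}_q\bold{n}_q^T] = \bold{D}_q^2 \neq \bold{0}$, so the pseudo-covariance of $\bold{n}_1$ picks up the term $\bold{W}_D^H\bold{D}_q^2(\bold{W}_D^H)^T$, which is generically nonzero, and $\bold{n}_1$ is then Gaussian but not circularly symmetric. Your closing paragraph shows you recognize that the circular-symmetry qualifier is the crux and must be ``inherited'' from $\bold{n}$ and $\bold{n}_q$, but noting the issue is not the same as resolving it: nothing in your argument derives the circular symmetry of $\bold{n}_q$ from what the theorem grants. (To be fair, the paper's own step here leans on the AQNM modeling assumption from the signal-model section---a covariance matrix alone can never determine the pseudo-covariance---but the paper at least makes the upgrade explicit via the factorization $\bold{n}_q = \bold{D}_q\bold{w}$, whereas your write-up skips it entirely.) The fix is small: before applying your two closure lemmas, insert the representation $\bold{n}_q = \bold{D}_q\bold{w}$ with $\bold{w} \sim \mathcal{CN}(\bold{0},\bold{I}_{N_s})$, justified by the diagonal structure of $\bold{D}_q^2$ and the AQNM model; everything else in your argument then goes through.
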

\begin{proof}
The condition for the random vector $\bold{n}_1$ to be CSCG is $\cite{Vishwa,RobertGallager}$
\begin{equation}\label{proof_eq1}
\begin{split}
&E[\bold{n}_1] = \bold{0},\\
&E[\bold{n}_1\bold{n}_1^T] = \bold{0}.
\end{split}
\end{equation}
Here, $E[\bold{n}_1\bold{n}_1^T]$ is the pseudo-covariance. 
We first prove that $\bold{n}_q$ is CSCG distributed as $\mathcal{CN}(\bold{0},{\bold{D}_q^2})$.\\
\noindent
Given ${\bold{D}_q^2} = E[\bold{n}_q\bold{n}_q^H] = {\bold{W}_{\alpha}}{\bold{W}_{1-\alpha}}{\text{diag}}[ {\bold{W}_A^H}{\bold{H}}({\bold{W}_A^H}{\bold{H}})^H+{\bold{I}_{N_s}}]$ with $\bold{W}_{\alpha}$, $\bold{W}_{1-\alpha}$,  and ${\text{diag}}[ {\bold{W}_A^H}{\bold{H}}({\bold{W}_A^H}{\bold{H}})^H+{\bold{I}_{N_s}}]$  being positive real diagonal matrices, effectively results in the covariance matrix ${\bold{D}_q^2}$ being positive real diagonal.\\
A necessary and sufficient condition  for a random vector $\bold{n}_q$ to be a circularly symmetric jointly Gaussian random vector is that it has the form $\bold{n}_q = \bold{A}\bold{w}$ where $\bold{w}$ is i.i.d. complex Gaussian, that is $\bold{w} \sim \mathcal{CN}(\bold{0},\bold{I_{N_s}})$ and $\bold{A}$ is an arbitrary complex matrix \cite{Vishwa, RobertGallager}. 
Since ${\bold{D}_q^2}$ is positive real diagonal matrix, we can express as 
\begin{equation}\label{proof_eq2}
\bold{n}_q = \bold{D}_q\bold{w},
\end{equation}
where $\bold{w} \sim \mathcal{CN}(\bold{0},\bold{I_{N_s}})$. This leads to $E[\bold{n}_q] = \bold{D}_qE[\bold{w}] = \bold{0}$ and $E[\bold{n}_q\bold{n}_q^T] = \bold{D}_qE[\bold{w}\bold{w}^T]\bold{D}_q = \bold{0}$.
Hence $\bold{n}_q$ is circularly symmetric jointly Gaussian random vector. 
\noindent
Using $\eqref{proof_eq2}$, we can express $\bold{n}_1$ as
\begin{equation}\label{proof_eq4}
\begin{split}
\bold{n}_1 = {\bold{W}_D^H}{\bold{W}_{\alpha}}{\bold{W}_A^H}{\bold{n}} + {\bold{W}_D^H}\bold{D}_q{\bold{w}}.
\end{split}
\end{equation}
Since we have $\bold{n}$ and $\bold{w}$ as i.i.d. complex Gaussian vectors, we have
\begin{equation}\label{proof_eq5}
\begin{split}
&E[\bold{n}\bold{n}^T] = E[\bold{w}\bold{n}^T] = \bold{0},\\
&E[\bold{n}\bold{w}^H] = E[\bold{w}\bold{n}^H] = \bold{0},\\
&E[\bold{n}\bold{n}^H] = \sigma_n^2\bold{I}_{N_s},\\
&E[\bold{w}\bold{w}^H] = \bold{I}_{N_s}.\\
\end{split}
\end{equation}
Thus, we arrive at 
\begin{equation}\label{proof_eq6}
\begin{split}
E[\bold{n}_1] &= {\bold{W}_D^H}{\bold{W}_{\alpha}}{\bold{W}_A^H}E[{\bold{n}}] + {\bold{W}_D^H}\bold{D}_qE[{\bold{w}}] = 0,\\
E[\bold{n}_1\bold{n}_1^T] &= \bold{G}E[\bold{n}\bold{n}^T]\bold{G}^T + \bold{G}E[\bold{n}\bold{w}^T]\bold{D}_q\bold{W}_D\\
&+ \bold{W}_D^T\bold{D}_qE[\bold{w}\bold{n}^T]\bold{G}^T + \bold{W}_D^T\bold{D}_qE[\bold{w}\bold{w}^T]\bold{D}_q\bold{W}_D\\
&= \bold{0}.
\end{split}
\end{equation}
Also, 
\begin{equation}\label{proof_eq7}
\begin{split}
E[\bold{n}_1\bold{n}_1^H] = \bold{\Phi} &= \bold{G}E[\bold{n}\bold{n}^H]\bold{G}^H + \bold{G}E[\bold{n}\bold{w}^H]\bold{D}_q\bold{W}_D\\
&+ \bold{W}_D^H\bold{D}_qE[\bold{w}\bold{n}^H]\bold{G}^H\\
& + \bold{W}_D^H\bold{D}_qE[\bold{w}\bold{w}^H]\bold{D}_q\bold{W}_D,\\
&= \sigma_n^2\bold{G}\bold{G}^H + \bold{W}_D^H\bold{D}_q^2\bold{W}_D.
\end{split}
\end{equation}
Thus, $\bold{n}_1 \sim \mathcal{CN}(\bold{0},\bold{\Phi})$ is a circularly symmetric jointly Gaussian vector.
\end{proof}
Hence, we arrive at 
\begin{equation}\label{hn_equal}
h(\bold{n1}) = \log_2\det(\pi e \bold{\Phi}).
\end{equation}
Thus, the maximum achievable mutual information $I(\bold{X};\bold{Y})$ can be written as
\begin{equation}\label{maxI}
\begin{split}
I(\bold{X};\bold{Y}) & \overset{(a)}=  h(\bold{y}) - h(\bold{n1}),\\
&= \log_2\det( \pi e \bold{Q}) - \log_2\det(\pi e \bold{\Phi}),\\
&= \log_2\det \Big (\bold{Q}\bold{\Phi}^{-1} \Big),\\
&= \log_2\det \Big ( p\bold{K}\bold{K}^H\bold{\Phi}^{-1} + \bold{I}_{N_s} \Big),
\end{split}
\end{equation}
where (a) follows from the assumption that the input symbol vector $\bold{x}$ is circular symmetric Gaussian vector that could be modeled 
as $\bold{x} \sim \mathcal{CN}(\bold{0},p\bold{I_{N_s}})$. We now simplify ($\ref{maxI}$) as
\begin{equation}\label{maxI_cont}
\begin{split}
I(\bold{X};\bold{Y}) &= \log_2\det \Big ( p\bold{K}\bold{K}^H\bold{\Phi}^{-1}\bold{K}\bold{K}^{-1} + \bold{K}\bold{K}^{-1} \Big),\\
&= \log_2\det \Big ( p\bold{K} \big ( \bold{K}^H\bold{\Phi}^{-1}\bold{K} + \frac{1}{p}\bold{I}_{N_s} \big) \bold{K}^{-1} \Big),\\
&= \log_2\det ( p\bold{K}) \det \Big ( \bold{K}^H\bold{\Phi}^{-1}\bold{K} + \frac{1}{p}\bold{I}_{N_s} \Big) \det (\bold{K}^{-1}),\\
&= \log_2 p^{N_s}\det \Big ( \bold{K}^H\bold{\Phi}^{-1}\bold{K} + \frac{1}{p}\bold{I}_{N_s} \Big).
\end{split}
\end{equation}
The capacity is computed by maximizing \eqref{maxI_cont}. We are maximizing $I(\bold{X};\bold{Y})$ for a given fixed channel $\bold{H}$, and for a given combiner set $\bold{\tilde{W}}_A^H$,$\bold{W}_D^H$. Hence, maximization of \eqref{maxI_cont} will be over the bit-allocation matrix $\bold{W}_{\alpha}$. $\bold{\Phi}$ is a function of $\bold{W}_{\alpha}$ and $\bold{K}$ is constant for fixed $\bold{H}$, $\bold{\tilde{W}}_A^H$, and $\bold{W}_D^H$.
\begin{equation}\label{maxI_cont2}
\begin{split}
C &= \max \Bigg\{ \log_2 p^{N_s}\det \Big ( \bold{K}^H\bold{\Phi}^{-1}\bold{K} + \frac{1}{p}\bold{I}_{N_s} \Big) \Bigg\} \\
 &= {N_s}\log_2 p + \log_2\det \Big ( ({\bold{I}^{-1}({\bold{\hat{x}}})})^{-1} + \frac{1}{p}\bold{I}_{N_s} \Big).
\end{split}
\end{equation}
Note that ${\bold{I}^{-1}({\bold{\hat{x}}})}$ is the Cramer-Rao Lower Bound (CRLB) that can be achieved by an efficient estimator (if one exists) for estimating the transmitted symbol vector $\bold{x}$, given the observations $\bold{y}$ in ($\ref{eq9a}$). In fact, it can be shown that for ($\ref{eq9a}$), there exists an efficient estimator such that 
$\bold{I}^{-1}({\bold{\hat{x}}}) = (\bold{K}^H\bold{\Phi}^{-1}\bold{K})^{-1}$ \cite{Zakir2}.
\subsection{Condition for optimal bit allocation}
Using the capacity expression derived in ($\ref{maxI_cont2}$), we now maximize the capacity by deriving a condition for an optimal bit allocation. The capacity is maximized by selecting some $\bold{b}^*$ that satisfies the ADC power constraint. We can write this expression for the maximum capacity from ($\ref{maxI_cont2}$) as 
\begin{equation}\label{maxcap}
\begin{split}
C &= {N_s}\log_2p \text{ }+ \\
&\underbrace{\text{max}}_{\substack{\bold{b}^*; {P_{\text{TOT}}}\leq{P_{\text{ADC}}}}}\Bigg\{ \log_2  \det \Big ( ({\bold{I}^{-1}({\bold{\hat{x}}})})^{-1} + \frac{1}{p}\bold{I}_{N_s} \Big) \Bigg\}.
\end{split}
\end{equation}
The condition for optimal ADC bit allocation $\bold{b}^*$ that optimizes ($\ref{maxcap}$) is given by
\begin{equation}\label{bitcond}
\bold{b}^* = \underbrace{\text{argmax}}_{\substack{\bold{b} \in \mathbb{I}^{N_s \times 1}; \\ {P_{\text{TOT}}}\leq{P_{\text{ADC}}}}}\Bigg\{ \log_2  \det \Big ( ({\bold{I}^{-1}({\bold{\hat{x}}})})^{-1} + \frac{1}{p}\bold{I}_{N_s} \Big) \Bigg\}.
\end{equation}
The CRLB for the linear estimator in ($\ref{eq9a}$) is derived  \cite{Zakir2} as
\begin{equation}\label{crlb}
\begin{split}
{\bold{I}^{-1}({\bold{\hat{x}}})} &= ({\bold{K}^H}{\bold{\Phi}^{-1}}{\bold{K}})^{-1}\\
&={\sigma_n^2}{\bold{\Sigma}^{-2}}+{\bold{K}^{-1}}{\bold{W}_D^H}{\bold{D}_q^2}{\bold{W}_D}({\bold{K}^H})^{-1}.
\end{split}
\end{equation}
By substituting $\bold{K}$ into ($\ref{crlb}$) and designing the analog combiner $\bold{\tilde{W}}_A^H$ adhering to the constraints imposed by the phase shifters and digital combiner as ${\bold{U}^H} = {\bold{W}_D}{\bold{\tilde{W}}_A^H}$ defined in \cite{Zakir2}, we have
\begin{equation}\label{crlb_cont}
\begin{split}
{\bold{I}^{-1}({\bold{\hat{x}}})} &= {\sigma_n^2}{\bold{\Sigma}^{-2}}\\
&+{\bold{\Sigma}}^{-1}{\bold{U}^H}({\bold{W}_A^H})^{-1}{\bold{W}_{\alpha}^{-1}}{\bold{D}_q^2}{\bold{W}_{\alpha}^{-1}}{\bold{W}_A^{-1}}{\bold{U}}{\bold{\Sigma}}^{-1},\\
&= {\sigma_n^2}{\bold{\Sigma}^{-2}}+{\bold{\Sigma}}^{-2}{\bold{W}_{\alpha}^{-2}}{\bold{D}_q^2}.
\end{split}
\end{equation}
We now compute the Inverse of CRLB $\Big({\bold{I}^{-1}({\bold{\hat{x}}})}\Big)^{-1}$ as
\begin{equation}\label{inv_crlb}
\begin{split}
&\Big({\bold{I}^{-1}({\bold{\hat{x}}})}\Big)^{-1} = \Big({\sigma_n^2}{\bold{\Sigma}^{-2}}+{\bold{\Sigma}}^{-2}{\bold{W}_{\alpha}^{-2}}{\bold{D}_q^2}\Big)^{-1},\\
&= \diagBigg{\frac{\sigma_1^2}{\sigma_n^2 + \frac{f(b_1)l_1}{\big(1-f(b_1)\big)}}, \cdots, \frac{\sigma_{N_s}^2}{\sigma_n^2 + \frac{f(b_{N_s})l_{N_s}}{\big(1-f(b_{N_s})\big)}}}.
\end{split}
\end{equation}
Note that $\bold{W}_{\alpha}(\bold{b}) = \diag{1-f(b_1), \cdots, 1-f(b_{N_s})}$, $\bold{D}_q^2 = \diag{(1-f(b_1))f(b_1)l_1, \cdots, (1-f(b_{N_s}))f(b_{N_s})l_{N_s}}$, and $l_i = ({1+{\bold{w}_{A_i}^H}{\bold{h_i}^H}{\bold{h_i}}{\bold{w}_{A_i}}})$. Further, ${\sigma_i}$ is the diagonal element of ${\bold{\Sigma}}$, $f(b_i)$ is the ratio of the Mean Square Quantization Error (MQSE) and the power of the symbol for a non-uniform MMSE quantizer with $b_i$ bits along the RF path $i$, $i=1,2,\ldots N_s$ $\cite{VarBitAlloc}$. The values for $f(b_i)$ are indicated in  Table $\ref{betaVal}$. We set $\bold{w}_{A_i}$ and $\bold{h}_i$ as the $i^{th}$ columns of the matrix $\bold{W_A}$ and $\bold{H}^H$, respectively. Substituting ($\ref{inv_crlb}$) in ($\ref{bitcond}$), we have
\begin{equation}
\begin{split}
\bold{b}^* &= \underbrace{\text{argmax}}_{\substack{\bold{b} \in \mathbb{I}^{N_s \times 1}; \\ {P_{\text{TOT}}}\leq{P_{\text{ADC}}}}} \log_2 \det \diagBigg{\frac{\sigma_1^2}{\sigma_n^2 + \frac{f(b_1)l_1}{\big(1-f(b_1)\big)}} + \frac{1}{p}}, \nonumber
\end{split}
\end{equation}
\begin{equation}\label{bitcond_cont}
\begin{split}
&= \underbrace{\text{argmax}}_{\substack{\bold{b} \in \mathbb{I}^{N_s \times 1}; \\ {P_{\text{TOT}}}\leq{P_{\text{ADC}}}}} \log_2 \prod_{i=1}^{N_s} \Bigg( \frac{\sigma_i^2}{\sigma_n^2 + \frac{f(b_i)l_i}{\big(1-f(b_i)\big)}} + \frac{1}{p} \Bigg),\\
&= \underbrace{\text{argmax}}_{\substack{\bold{b} \in \mathbb{I}^{N_s \times 1}; \\ {P_{\text{TOT}}}\leq{P_{\text{ADC}}}}} \sum_{i=1}^{N_s} \Bigg\{ \log_2  \Bigg( q(b_i) + 1 \Bigg) \Bigg\},
\end{split}
\end{equation}
where $q(b_i) = \frac{p\sigma_i^2}{\sigma_n^2 + \frac{f(b_i)l_i}{\big(1-f(b_i)\big)}}$. The term $\log_2  \bigg( q(b_i) + 1 \bigg)$ can be expanded using series expansion for two scenarios. In the first case, we expand the term for $0 \leq q(b_i) < 1$. In second case, we expand the term using Taylor series for $1  \leq q(b_i) < \infty$. Due to page limitation, the proofs for (\ref{px_less_1}) and (\ref{log2_taylor}) are omitted.
\subsubsection{Case-1}
The term $\log_2  \bigg( q(b_i) + 1 \bigg)$ for $0 \leq q(b_i) < 1$, can be written as: 
\begin{equation}\label{px_less_1}
\log_2  \bigg( q(b_i) + 1 \bigg) \simeq \frac{q(b_i)}{\ln2}. 
\end{equation}
Thus the maximization in $\eqref{bitcond_cont}$ can be written as
\begin{equation}\label{bitcond_cont2}
\bold{b}^* = \underbrace{\text{argmax}}_{\substack{\bold{b} \in \mathbb{I}^{N_s \times 1}; \\ {P_{\text{TOT}}}\leq{P_{\text{ADC}}}}} \sum_{i=1}^{N_s} \frac{p\sigma_i^2}{\sigma_n^2 + \frac{f(b_i)l_i}{\big(1-f(b_i)\big)}}.
\end{equation}
\subsubsection{Case-2}
The term $\log_2  \bigg( q(b_i) + 1 \bigg)$ can be expanded using the Taylor series for $\infty > q(b_i) \geq 1$ as 
\begin{equation}\label{log2_taylor}
\log_2  \bigg( q(b_i) + 1 \bigg) = \Bigg(1-\frac{1}{q(b_i)}\Bigg)P + L(p,\sigma_i^2, \sigma_n^2)
\end{equation}
where $P$ and $L(p,\sigma_i^2, \sigma_n^2)$ are terms that are not a function of $b_i$. The maximization in $\eqref{bitcond_cont}$ can be simplified to 
\begin{equation}\label{bitcond_cont1}
\begin{split}
\bold{b}^* &= \underbrace{\text{argmax}}_{\substack{\bold{b} \in \mathbb{I}^{N_s \times 1}; \\ {P_{\text{TOT}}}\leq{P_{\text{ADC}}}}} \sum_{i=1}^{N_s} \Bigg(1-\frac{1}{q(b_i)}\Bigg), \\
&= \underbrace{\text{argmax}}_{\substack{\bold{b} \in \mathbb{I}^{N_s \times 1}; \\ {P_{\text{TOT}}}\leq{P_{\text{ADC}}}}} \sum_{i=1}^{N_s} \frac{p\sigma_i^2}{\sigma_n^2 + \frac{f(b_i)l_i}{\big(1-f(b_i)\big)}}.
\end{split}
\end{equation}
We observe that $\eqref{bitcond_cont1}$ is the same as $\eqref{bitcond_cont2}$. Hence both scenarios lead to the same optimization problem. We now define the term $K_f(b_i)$ for a given bit allocation $b_i$ on RF path $i$ as 
\begin{equation}\label{gain_term}
k_f(b_i) \triangleq \frac{p\sigma_i^2}{\sigma_n^2 + \frac{f(b_i)l_i}{\big(1-f(b_i)\big)}}.
\end{equation}
For a given bit allocation $\bold{b}_j$ in $B_{\text{set}}$, we define
\begin{equation}\label{gain_termsum}
K_f(\bold{b}_j) \triangleq \sum_{i=1}^{N_s} \frac{p\sigma_i^2}{\sigma_n^2 + \frac{f(b_i)l_i}{\big(1-f(b_i)\big)}},
\end{equation}
where $B_{\text{set}}$ is a set of all possible bit allocations that satisfy a given ADC power budget $P_{ADC}$ for given $N_s$ \cite{Zakir2, Zakir1}.
\begin{equation}\label{bset}
\begin{split}
B_{\text{set}} \triangleq \big\{ &\bold{b}_j = {\big[ b_{j1}, b_{j2}, \dots, b_{jN}  \big]}^T \text{ for } 0 \leq j < 4^{N_s} \mid \\
& 1 \le b_{ji} \le 4 \text{ and } \sum_{i=1}^{N} cf_s2^{b_{ji}} \leq P_{\text{ADC}} \big\}.
\end{split}
\end{equation}
The maximization in $\eqref{bitcond_cont1}$ can be written as
\begin{equation}\label{bitcond_cont3}
\bold{b}^* = \underbrace{\text{argmax}}_{\substack{\bold{b}_j \in B_{\text{set}}; \\ {P_{\text{TOT}}}\leq{P_{\text{ADC}}}}} K_f(\bold{b}_j).
\end{equation}
\indent
Interestingly, this is the same condition that was derived for optimal bit allocation by minimizing the MSE criterion in \cite{Zakir2}.\\
\indent
It is also worth noting that using a high-resolution ADCs, $\bold{D}_q^2 = \bold{0}$ and the CRLB defined in ($\ref{crlb}$) reduces to ${\bold{I}^{-1}({\bold{\hat{x}}})} = {\sigma_n^2}{\bold{\Sigma}^{-2}}$. Substituting the same into ($\ref{maxI_cont2}$), we can write the expression for the capacity for the given channel with infinite-resolution ADCs as
\begin{equation}\label{infcap}
\begin{split}
C_{\infty} &= \log_2 p^{N_s}\det \Big ( \frac{1}{\sigma_n^2}\bold{\Sigma}^2 + \frac{1}{p}\bold{I}_{N_s} \Big),\\
&= \log_2 \det \Big ( \frac{p}{\sigma_n^2}\bold{\Sigma}^2 + \bold{I}_{N_s} \Big).
\end{split}
\end{equation}
With uniform power allocation on the transmitter, $p$ is uniformly divided along $N_s$ RF paths and the capacity with uniform power allocation at the transmitter becomes
\begin{equation}\label{infcap_unfm}
C_{\infty} = \sum_{i=1}^{N_s} {\log_2 \Bigg( \frac{\rho}{N_s}\sigma_i^2 + 1\Bigg)},
\end{equation}
where $\rho = \frac{p}{\sigma_n^2}$ is the average SNR at the receiver.\\
Similarly, with perfect Channel State Information at the Transmitter and waterfilling, the capacity with high resolution ADCs can be written as
\begin{equation}\label{infcap_waterfill}
C_{\infty} = \sum_{i=1}^{N_s} {\log_2 \Bigg( \epsilon_i\frac{\rho}{N_s}\sigma_i^2 + 1\Bigg)},
\end{equation}
where $\epsilon_i$ is the portion of the total power $p$ allocated to RF path $i$ at the transmitter based on water-filling algorithm \cite{Vishwa}. Thus ($\ref{infcap_unfm}$) and ($\ref{infcap_waterfill}$) derived in \cite{Bengt} are special cases of (\ref{maxI_cont2}).
\begin{table}[t]
\begin{tabu} to 0.5\textwidth { c c c c c c }
\hline
$b_i$  & 1 & 2 & 3 & 4 & 5 \\
\hline
$f(b_i)$ & 0.3634 & 0.1175 & 0.03454 & 0.009497 & 0.002499 \\
\hline
\end{tabu}
\vspace{1mm}
\caption{Values of $f(b_i)$ for different ADC Quantization Bits $b_i$} \label{betaVal}
\end{table}
\section{Simulations} \label{Sim}
We simulate the mmWave channel using the NYUSIM channel simulator with the configurations specified in Table $\ref{nyusimtab}$ $\cite{nyusim}$. We strengthen the singular value on the dominant channel to simulate a strong scatterer $\cite{SVDcorrChannels}$. We consider $N_s=8$ or $N_s=12$ strong channels (RF paths) for capacity simulations. The combiners are designed as per ($\ref{CombDesign}$).\\
\indent
We run the simulations to evaluate the capacity as derived in ($\ref{maxI_cont2}$). The plots obtained at various SNRs are shown for $N_s = 8$ and $N_s = 12$ in Figure $\ref{fig:crlb_Capacity_Nr8_H64by32.eps}$ and Figure $\ref{fig:crlb_Capacity_Nr12_H64by32.eps}$, respectively. The simulations for capacity are obtained with all 1-bit ADCs, 2-Bit ADCs and with no quantization across RF paths. This is indicated in Figure $\ref{fig:crlb_Capacity_Nr8_H64by32.eps}$ and Figure $\ref{fig:crlb_Capacity_Nr12_H64by32.eps}$ using lines (a), (b) and (d) respectively. We also evaluate the capacity ($\ref{maxI_cont2}$) with all possible bit configurations shown in ($\ref{bset}$) that satisfy a given ADC power budget. We then pick the bit configuration, that results in the maximum capacity and call it as the Exhaustive Search (ES) solution. We also evaluate the bit configuration that maximizes the capacity based on our proposed condition in ($\ref{bitcond_cont3}$). The algorithm for arriving at this condition is similar to the one that minimizes the MSE in \cite{Zakir2}. We notice that the capacity obtained with the bit configuration solution from our proposed approach (line-(e)) is very close to the exhaustive search solution (line-(c)). \\
\indent
The condition for optimal bit-allocation based on capacity maximization \eqref{bitcond_cont3} is the same as the condition for optimal bit allocation based on CRLB minimization derived in \cite{Zakir2}. Hence, the bit-allocation algorithm are same in both cases. It is shown that the computational complexity of the bit-allocation algorithm in \cite{Zakir2} has an order of magnitude improvement over Exhaustive Search (ES) technique and Genetic Algorithm (GA) technique based in \cite{Zakir1}. Table \ref{tab:CRLBTab1} summarizes the computational advantage of the proposed bit-allocation algorithm over ES and GA \cite{Zakir2}.
\begin{table}
\begin{center}
\begin{tabu} to 0.5\textwidth { | p{1cm} | p{1cm} | X[c] | X[c] | p{1cm} | X[c] | p{0.8cm} |}
\hline
\multirow{3}{1.5cm}{Number of RF paths} & \multicolumn{3}{c|}{Number of complex} & %
    \multicolumn{3}{c|}{Number of complex}\\
& \multicolumn{3}{c|}{multiplications$^{\S}$} & \multicolumn{3}{c|}{additions$^{\S}$}  \\
\cline{2-7}
& \centering ES & GA & \scriptsize Proposed-Algo. & \centering ES & GA & \scriptsize Proposed-Algo.\\
\hline
\centering \multirow{2}{*}{8}  & \scriptsize1622592 & \scriptsize 279936  & \scriptsize \textcolor{red}{864} & \scriptsize1592544 & \scriptsize274752 & {\scriptsize \textcolor{red}{760}} \\
& & & & & & {\scriptsize \textcolor{red}{13146$^{\dagger}$}}\\
\hline
\centering \multirow{2}{*}{12} & \scriptsize179092032 & \scriptsize 2721600 & \scriptsize \textcolor{red}{1296} & \scriptsize175893960 & \scriptsize 2673000 & \scriptsize \textcolor{red}{1140} \\
& & & & & & {\scriptsize \textcolor{red}{1465783$^{\dagger}$}}\\
\hline
\end{tabu}
\footnotesize{$^{\dagger}$ Real additions} \\
\vspace{0.5mm}
\caption{Computational complexity in terms of total number of multiplications and additions} \label{tab:CRLBTab1}
\end{center}
\end{table}
\begin{figure}[t]
\centering
\includegraphics[width=0.4\textwidth,scale=0.4]{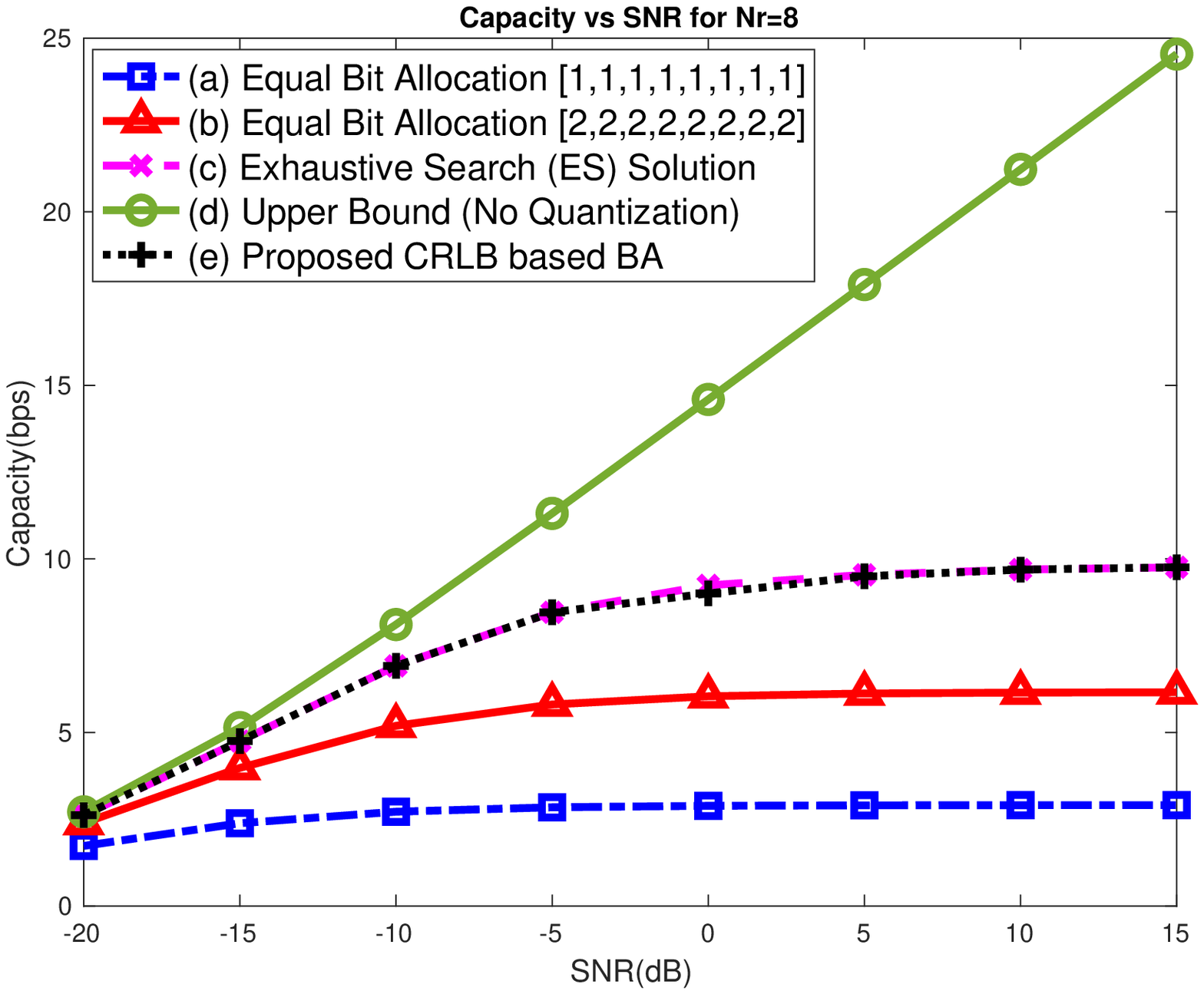}
\vspace{-0.2in}
\caption{Capacity vs. SNR for $N_s=8$ for all 1-bit, 2-bit, ES and Proposed method bit configurations}
\label{fig:crlb_Capacity_Nr8_H64by32.eps}
\includegraphics[width=0.4\textwidth,scale=0.4]{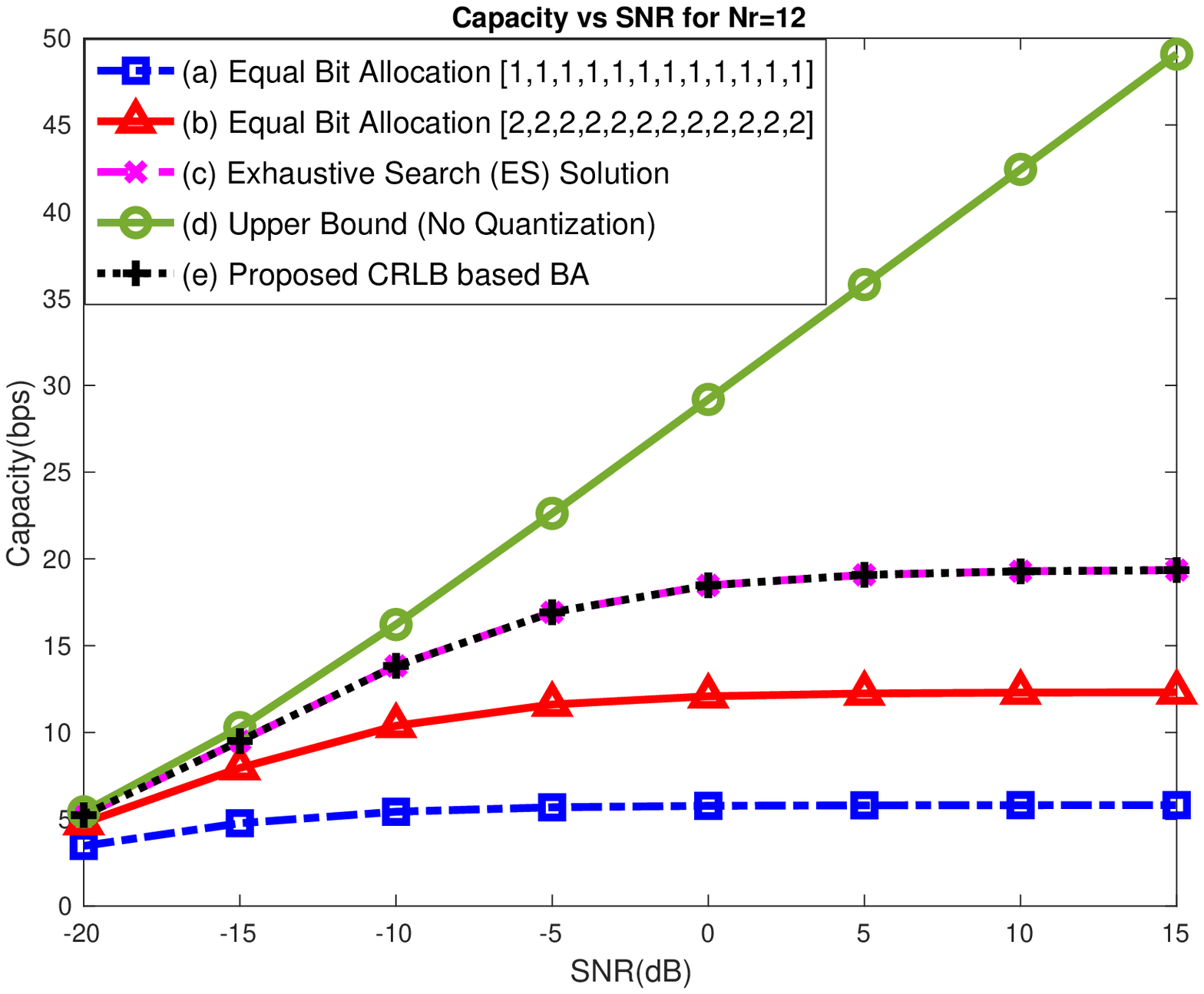}
\vspace{-0.2in}
\caption{Capacity vs. SNR for $N_s=12$ for all 1-bit, 2-bit, ES and Proposed method bit configurations}
\label{fig:crlb_Capacity_Nr12_H64by32.eps}
\end{figure}
\begin{table}
\begin{center}
\begin{tabu} to 0.5\textwidth {| l| l| }
 \hline
 \textbf{Parameters}  & \textbf{Value/Type} \\
 \hline
Frequency & 28Ghz \\
\hline
Environment & Line of sight \\
\hline
T-R seperation & 100m\\
\hline
TX/RX array type & ULA\\
\hline 
Num of TX/RX elements $N_t$/$N_r$ & 32/64\\
\hline
TX/RX  antenna spacing & $\lambda/2$\\
\hline
\end{tabu}
\vspace{1mm}
\caption{$\text{Channel parameters for NYUSIM model $\cite{nyusim}$}$} \label{nyusimtab}
\vspace{-3mm}
\end{center}
\end{table}
\section{Conclusion}
In this paper, we derive an expression for the capacity of a mmWave massive MIMO system for a given channel, with a receiver comprising of variable-bit resolution ADCs and hybrid combiner. We show that the capacity is a function of CRLB, which in turn is a function of bit-allocation matrix and hybrid combining matrices. The MSE for the quantized and combined vector achieves CRLB \cite{Zakir2}. We show that by maximizing the expression for the capacity derived for a given channel and hybrid combiner, we arrive at a condition for optimal bit-allocation. We also show that this condition is same as the one that is obtained by minimizing the CRLB in \cite{Zakir2}. 

We support the above claims through simulations. It is seen that the capacity evaluated at various SNRs using proposed bit-allocation design is very close to the Exhaustive Search (ES) technique. Also, we see that for the channel conditions stated in Section-\ref{Sim}, the optimal bit-allocation is not all equal bits on the receiver's RF paths for given ADC power budget. The proposed bit-allocation algorithm has the same computational complexity as that of \cite{Zakir2}, which has an order of magnitude improvement over ES and GA technique based on \cite{Zakir1}.

\section*{Acknowledgment}
The authors would like to thank National Instruments for the support extended for this work.  



%
\bibliographystyle{IEEEtran}
\bibliography{Pap3BibTexFile}
\end{document}